\let\citep\cite
\DeclareMathOperator{\sgn}{sgn}
\DeclareMathOperator{\ins}{Ins}
\DeclareMathOperator{\alphabet}{alph}
\tikzset{state/.style={draw,circle,minimum width=6pt}}
\tikzset{every edge/.style={draw,->,>=stealth',shorten >=1pt,auto,semithick}}
\tikzset{initial text={},double distance=2pt}
\title{Algorithms and Training for Weighted Multiset Automata and Regular Expressions}
\titlerunning{Multiset Automata}
\author{Justin DeBenedetto \and David Chiang}
\institute{Department of Computer Science and Engineering \\
University of Notre Dame, Notre Dame, IN 46556, USA\\
\email{\tt \{jdebened,dchiang\}@nd.edu}}
\authorrunning{DeBenedetto and Chiang}
\begin{document}

\maketitle

\begin{abstract}
Multiset automata are a class of automata for which the symbols can be read in any order and obtain the same result.  We investigate weighted multiset automata and show how to construct them from weighted regular expressions.  We present training methods to learn the weights for weighted regular expressions and for general multiset automata from data.  Finally, we examine situations in which inside weights can be computed more efficiently.
\keywords{multiset automata, multiset regular expressions, weighted automata, weighted regular expressions}
\end{abstract}

\section{Introduction}

Automata have been widely studied and utilized for pattern and string matching problems. A string automaton reads the symbols of an input string one at a time, after which it accepts or rejects the string. But in certain instances, the order in which the symbols appear is irrelevant.

For example, in a set of graphs, the edges incident to a node are unordered and therefore their labels form a commutative language. Or, in natural language processing, applications might arise in situations where a sentence is generated by a context-free grammar subject to (hard or soft) order-independent constraints. For example, in summarization, there might be an unordered set of facts that must be included. Or, there might be a constraint that among the references to a particular entity, exactly one is a full NP. 

To handle these scenarios, we are interested in weighted automata and weighted regular expressions for multisets. 
This paper makes three main contributions:
\begin{itemize}
\item We define a new translation from weighted multiset regular expressions to weighted multiset automata, more direct than that of Chiang et al.~\cite{chiang+:cl2018} and more compact (but less general) than that of Droste and Gastin~\cite{droste+gastin:1999}.
\item We discuss how to train weighted multiset automata and regular expressions from data.
\item We give a new composable representation of partial runs of weighted multiset automata that is more efficient than that of Chiang et al.~\cite{chiang+:cl2018}.
\end{itemize}

\section{Definitions}

We begin by defining weighted multiset automata (\S\ref{sec:weightedMultisetAutomata}) and the related definitions from previous papers for weighted multiset regular expressions (\S\ref{sec:weightedMultisetRegularExpressions}).

\subsection{Preliminaries}

For any natural number $n$, let $[n] = \{1, \ldots, n\}$.

A \emph{multiset} over a finite alphabet $\Sigma$ is a mapping from $\Sigma$ to $\mathbb{N}_0$. For consistency with standard notation for strings, we write $a$ (where $a \in \Sigma$) instead of $\{a\}$, $uv$ for the multiset union of multisets $u$ and $v$, and $\epsilon$ for the empty multiset.

The \emph{Kronecker product} of a $m \times n$ matrix $A$ and a $p \times q$ matrix $B$ is the $mp \times nq$ matrix
\[ A \otimes B = \begin{bmatrix}
A_{11} B & \cdots & A_{1m} B \\
\vdots & \ddots &\vdots \\
A_{n1} B & \cdots & A_{mn} B \\
\end{bmatrix}.
\]

If $w$ is a string over $\Sigma$, we write $\alphabet(w)$ for the subset of symbols actually used in $w$; similarly for $\alphabet(L)$ where $L$ is a language. If $|\alphabet(L)|=1$, we say that $L$ is \emph{unary}.

\subsection{Weighted multiset automata}
\label{sec:weightedMultisetAutomata}

We formulate weighted automata in terms of matrices as follows.
Let $\mathbb{K}$ be a commutative semiring.

\begin{definition}
A \emph{$\mathbb{K}$-weighted finite automaton} (WFA) over $\Sigma$ is a tuple $M=(Q, \Sigma, \lambda, \mu, \rho)$, where
$Q = [d]$ is a finite set of states,
$\Sigma$ is a finite alphabet,
$\lambda \in \mathbb{K}^{1 \times d}$ is a row vector of initial weights,
$\mu : \Sigma \rightarrow \mathbb{K}^{d \times d}$ assigns a transition matrix to every symbol, and
$\rho \in \mathbb{K}^{d \times 1}$ is a column vector of final weights.
\end{definition}
For brevity, we extend $\mu$ to strings: If $w \in \Sigma^\ast$, then $\mu(w) = \mu(w_1) \cdots \mu(w_n)$. 
Then, the weight of all paths accepting $w$ is
$M(w) = \lambda \, \mu(w) \, \rho$.
Note that in this paper we do not consider $\epsilon$-transitions.
Note also that one unusual feature of our definition is that it allows a WFA to have more than one initial state.

\begin{definition}
A \emph{$\mathbb{K}$-weighted multiset finite automaton} is one whose transition matrices commute pairwise. That is, for all $a, b \in \Sigma$, we have $\mu(a)\mu(b) = \mu(b)\mu(a)$.
\end{definition}

\subsection{Weighted multiset regular expressions}
\label{sec:weightedMultisetRegularExpressions}

This definition follows that of Chiang et al.~\cite{chiang+:cl2018}, which in turn is a special case of that of Droste and Gastin~\cite{droste+gastin:1999}.
\begin{definition}
A \emph{$\mathbb{K}$-weighted multiset regular expression} over $\Sigma$ is an expression belonging to the smallest set $\mathcal{R}(\Sigma)$ satisfying:
\begin{itemize}
\item If $a \in \Sigma$, then $a \in \mathcal{R}(\Sigma)$.
\item $\epsilon \in \mathcal{R}(\Sigma)$.
\item $\emptyset \in \mathcal{R}(\Sigma)$.
\item If $\alpha, \beta \in \mathcal{R}(\Sigma)$, then $\alpha \cup \beta \in \mathcal{R}(\Sigma)$.
\item If $\alpha, \beta \in \mathcal{R}(\Sigma)$, then $\alpha \beta \in \mathcal{R}(\Sigma)$.
\item If $\alpha \in \mathcal{R}(\Sigma)$, then $\alpha^\ast \in \mathcal{R}(\Sigma)$.
\item If $\alpha \in \mathcal{R}(\Sigma)$ and $k \in \mathbb{K}$, then $k\alpha \in \mathcal{R}(\Sigma)$.
\end{itemize}
\end{definition}
We define the language described by a regular expression, $\mathcal{L}(\alpha)$, by analogy with string regular expressions. Note that $\epsilon$ matches the empty multiset, while $\emptyset$ does not match any multisets. Interspersing weights in regular expressions allows regular expressions to describe weighted languages.
\begin{definition}
A multiset mc-regular expression is one where in every subexpression $\alpha^\ast$, $\alpha$ is:
\begin{itemize}
\item proper: $\epsilon \notin \mathcal{L}(\alpha)$, and
\item monoalphabetic and connected: $\mathcal{L}(\alpha)$ is unary.
\end{itemize}
\end{definition}
As an example of why these restrictions are needed, consider the regular expression $(ab)^\ast$. Since the symbols commute, this is equivalent to $\{a^nb^n\}$, which multiset automata would not be able to recognize.
From now on, we assume that all multiset regular expressions are mc-regular and do not write ``mc-.''

\section{Matching Regular Expressions}

In this section, we consider the problem of computing the weight that a multiset regular expression assigns to a multiset. The bad news is that this problem is NP-complete (\S\ref{sec:npcomplete}). However, we can convert a multiset regular expression to a multiset automaton (\S\ref{sec:RegExpToAutomata}) and run the automaton.

\subsection{NP-completeness}
\label{sec:npcomplete}

\begin{theorem}
The membership problem for multiset regular expressions is NP-complete. \end{theorem}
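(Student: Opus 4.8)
The plan is to prove the two directions separately, working with the decision problem the theorem names: given a (multiset) regular expression $\alpha$ and a multiset $u$, decide whether $u \in \mathcal{L}(\alpha)$ (over the Boolean semiring this is the same as asking whether the weight assigned to $u$ is nonzero). For membership in NP I would use the observation that $\mathcal{L}(\alpha)$ is exactly the Parikh image of the ordinary \emph{string} language of $\alpha$: passing to the commutative image commutes with $\cup$, with concatenation, and with ${}^\ast$, and the weight literals $k\alpha$ do not affect the language at all. Hence $u \in \mathcal{L}(\alpha)$ if and only if some string $w$ that is a permutation of $u$ lies in the string language of $\alpha$. A nondeterministic algorithm guesses such a $w$ (of size $|u|$) and tests string-regular-expression membership, which is decidable in polynomial time by the classical algorithm; so the problem is in NP. (If one prefers to stay inside the multiset formalism, the same bound follows by guessing a matching derivation of $u$ by $\alpha$ and checking, by a routine structural induction, that a smallest one has size polynomial in $|\alpha|$ and $|u|$ — the only delicate case being $\alpha^\ast$, where each unrolled copy of $\alpha$ consumes at least one symbol of $u$ and so the number of copies is at most $|u|$.)

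For NP-hardness I would reduce from \textsc{Exact Cover by 3-Sets}: given a universe $U$ with $|U| = 3q$ and a family $\mathcal{S} = \{S_1, \dots, S_r\}$ of three-element subsets of $U$, decide whether some subfamily partitions $U$. Put $\Sigma = U$; for each $S_i = \{a_i, b_i, c_i\}$ form $\sigma_i = \epsilon \cup (a_i b_i c_i)$ — intuitively, ``either use $S_i$, contributing its three elements to the multiset, or do not'' — let $\alpha = \sigma_1 \sigma_2 \cdots \sigma_r$, and let $t = \prod_{u \in U} u$ be the multiset containing every element of $U$ with multiplicity one. Both $\alpha$ and $t$ are computable in polynomial time from the instance, and $\alpha$ contains no ${}^\ast$, so it is vacuously mc-regular.

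The step I expect to require the most care is the correctness of this reduction, which rests on the fact that multiset concatenation \emph{adds} multiplicities while regular expressions offer no way to \emph{subtract}: $\mathcal{L}(\alpha)$ is precisely the set of multiset unions $\bigcup_{i \in I} S_i$ over subfamilies $I \subseteq [r]$, so any two overlapping chosen sets force some symbol to appear with multiplicity $\ge 2$. Consequently $t \in \mathcal{L}(\alpha)$ iff some subfamily has multiset union in which every element of $U$ occurs exactly once, i.e.\ iff some subfamily is a partition of $U$. Since \textsc{Exact Cover by 3-Sets} is NP-complete, membership for multiset regular expressions is NP-hard, and together with the first part it is NP-complete. (A reduction from \textsc{3-Sat} also works — one alphabet symbol per clause, a per-variable gadget that picks a truth value and then optionally emits the symbol of each clause that value satisfies, and a target demanding each clause symbol exactly once, so that every clause must be emitted by some variable — but I would keep the X3C reduction for brevity.)
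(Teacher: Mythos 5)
Your proof is correct, but it takes a genuinely different route from the paper's. For hardness, the paper reduces from 3SAT: it translates the formula clause-by-clause into $\alpha = \mathcal{T}(\phi)$ (clauses become unions of literals, conjunction becomes concatenation) and then appends a per-variable ``padding'' expression $\beta = \prod_x \left(x^n(\bar{x}\cup\epsilon)^n \cup (x\cup\epsilon)^n\bar{x}^n\right)$ whose job is to force consistency of the truth assignment by saturating each literal's count to $n$ in the target multiset $\prod_x x^n\bar{x}^n$. Your reduction from \textsc{Exact Cover by 3-Sets} instead uses the single gadget $\prod_i\bigl(\epsilon \cup (a_i b_i c_i)\bigr)$ with target multiplicity one everywhere, and its correctness rests on exactly the observation you flag: concatenation adds multiplicities and nothing can subtract them, so overlapping sets are automatically excluded. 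This is shorter and avoids the counting argument entirely, at the cost of invoking a different (but equally standard) NP-complete source problem. You also supply the NP upper bound --- via the observation that $\mathcal{L}(\alpha)$ is the Parikh image of the string language of $\alpha$, so a permutation of the input multiset serves as a polynomial-size certificate checkable by ordinary string-regular-expression matching --- which the paper's proof omits altogether even though the theorem claims completeness, not just hardness. Both arguments are sound; yours is the more self-contained of the two.
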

\begin{proof}
Define a transformation $\mathcal{T}$ from Boolean formulas in CNF over a set of variables $X$ to multiset regular expressions over the alphabet $X \cup \{\bar{x} \mid x \in X\}$:
\begin{align*}
\mathcal{T}(\phi_1 \lor \phi_2) &= \mathcal{T}(\phi_1) \cup \mathcal{T}(\phi_2) \\
\mathcal{T}(\phi_1 \land \phi_2) &= \mathcal{T}(\phi_1) \mathcal{T}(\phi_2) \\
\mathcal{T}(x) &= x \\
\mathcal{T}(\neg x) &= \bar{x}
\end{align*}
Given a formula $\phi$ in 3CNF, construct the multiset regular expression $\alpha = \mathcal{T}(\phi)$. Let $n$ be the number of clauses in $\phi$.
Then form the expression
\[ \beta = \prod_x \left(x^n (\bar{x} \cup \epsilon)^n \cup (x \cup \epsilon)^n \bar{x}^n\right) \]
Both $\alpha$ and $\beta$ clearly have length linear in $n$. We claim that $\phi$ is satisfiable if and only if $L(\alpha\beta)$ contains $w = \prod_x x^n \bar{x}^n$. 

\begin{trivlist}
\item $(\Rightarrow)$ If $\phi$ is satisfiable, form a string $u = u_1 \cdots u_n$ as follows. For $i = 1, \ldots n$, the $i$th clause of $\phi$ has at least one literal made true by the satisfying assignment. If it's $x$, then $u_i = x$; if it's $\neg x$, then $u_i = \bar{x}$. Clearly, $u \in L(\alpha)$. Next, form a string $v = \prod_x v_x$, where the $v_x$ are defined as follows. For each $x$, if $x$ is true under the assignment, then there are $k \geq 0$ occurrences of $x$ in $u$ and zero occurrences of $\bar{x}$ in $u$. Let $v_x = x^{n-k} \bar{x}^n$. Likewise, if $x$ is false under the assignment, then there are $k \geq 0$ occurrences of $\bar{x}$ and zero occurrences of $x$, so let $v_x = x^k \bar{x}^{n-k}$. Clearly, $uv = w$ and $v \in L(\beta)$.
\medskip
\item $(\Leftarrow)$ If $w \in L(\alpha\beta)$, then there exist strings $uv=w$ such that $u \in L(\alpha)$ and $v \in L(\beta)$. For each $x$, it must be the case that $v$ contains either $x^n$ or $\bar{x}^n$, so that $u$ must either not contain $x$ or not contain $\bar{x}$. In the former case, let $x$ be false; in the latter case, let $x$ be true. The result is a satisfying assignment for~$\phi$. \qed
\end{trivlist}
\end{proof}

\subsection{Conversion to multiset automata}
\label{sec:RegExpToAutomata}
Given a regular expression $\alpha$, we can construct a finite multiset automaton corresponding to that regular expression. 
In addition to $\lambda$, $\mu(a)$, and $\rho$, we compute Boolean matrices $\kappa(a)$ with the same dimensions as $\mu(a)$. The interpretation of these matrices is that whenever the automaton is in state $q$, then $[\kappa(a)]_{qq} = 1$ iff the automaton has not read an $a$ yet.

If $\alpha = a$, then for all $b \neq a$:
\begin{align*}
\lambda &= \begin{bmatrix}1 & 0 \end{bmatrix} &
\mu(a) &= \begin{bmatrix}
0 & 1 \\
0 & 0 \\
\end{bmatrix} &
\kappa(a) &= \begin{bmatrix}1 & 0 \\ 0 & 0 \end{bmatrix} &
\rho &= \begin{bmatrix}0 \\ 1 \end{bmatrix}
\\
&& \mu(b) &= \begin{bmatrix}
0 & 0 \\
0 & 0 \\
\end{bmatrix} & \kappa(b) &= \begin{bmatrix}1 & 0 \\0 & 1\end{bmatrix}.
\end{align*}
If $\alpha = k \alpha_1$ (where $k \in \mathbb{K}$), then for all $a \in \Sigma$:
\begin{align*}
\mu(a)&=\mu_1(a) &
\lambda &= \lambda_1 &
\rho &= k \rho_1 &
\kappa(a) &= \kappa(a).
\end{align*}
If $\alpha = \alpha_1 \cup \alpha_2$, then for all $a \in \Sigma:$
\begin{align*}
\mu(a) = \begin{bmatrix}\mu_1(a) & 0 \\ 0 & \mu_2(a)\end{bmatrix}
\lambda &= \begin{bmatrix}\lambda_1 & \lambda_2 \end{bmatrix} &
\rho &= \begin{bmatrix}\rho_1 \\ \rho_2 \end{bmatrix} &
\kappa(a) &= \begin{bmatrix}\kappa_1(a) & 0 \\ 0 & \kappa_2(a) \end{bmatrix}
\end{align*}
If $\alpha = \alpha_1 \alpha_2$, then for all $a \in \Sigma$:
\begin{align*}
\mu(a) &= \mu_1(a) \otimes \kappa_2(a) + I \otimes \mu_2(a) &
\lambda &= \lambda_1 \otimes \lambda_2 \\
\kappa(a) &= \kappa_1(a) \otimes \kappa_2(a) &
\rho &= \rho_1 \otimes \rho_2.
\end{align*}
If $\alpha = \alpha_1^\ast$ and $\alpha_1$ is unary, then for all $a \in \Sigma:$
\begin{align*}
\mu(a) &= \mu_1(a) + \rho \lambda \mu_1(a) &
\lambda &= \lambda_1 &
\rho &= \rho_1 + \lambda_1^{\top} &
\kappa(a) &= \kappa_1(a).
\end{align*}

This construction can be explained intuitively as follows. The case $\alpha=a$ is standard.  
The union operation is standard except that the use of two initial states makes for a simpler formulation.  
The shuffle product is similar to a conventional shuffle product except for the use of $\kappa_2$. It builds an automaton whose states are pairs of states of the automata for $\alpha_1$ and $\alpha_2$. The first term in the definition of $\mu(a)$ feeds $a$ to the first automaton and the second term to the second; but it can be fed to the first only if the second has not already read an $a$, as ensured by $\kappa_2(a)$. Finally, Kleene star adds a transition from final states to ``second'' states (states that are reachable from the initial state by a single $a$-transition), while also changing all initial states into final states.

Let $A(\alpha)$ denote the multiset automaton constructed from $\alpha$. We can bound the number of states of $A(\alpha)$ by $2^{|\alpha|}$ by induction on the structure of $\alpha$.  For $\alpha = \epsilon$, $|A(\alpha)| = 1 \leq 2^{|\alpha|}$.  For $\alpha = a$, $|A(\alpha)| = 2 \leq 2^{|\alpha|}$.  For $\alpha = \alpha_1 \bigcup \alpha_2$, $|A(\alpha)| = |A(\alpha_1)| + |A(\alpha_2)| \leq 2^{|\alpha|}$.  For $\alpha = \alpha_1\alpha_2$, $|A(\alpha)| = |A(\alpha_1)| |A(\alpha_2)| \leq 2^{|\alpha|}$.  For $\alpha = \alpha_1^\ast$, $|A(\alpha)| = |A(\alpha_1)| \leq 2^{|\alpha|}$.

\subsection{Related work}

Droste and Gastin~\cite{droste+gastin:1999} show how to perform regular operations for the more general case of trace automata (automata on monoids). 
Our use of $\kappa$ resembles their forward alphabet.  
Our construction does not utilize anything akin to their backward alphabet, so that we allow outgoing edges from final states and we allow initial states to be final states.  
Their construction, when converting $\alpha_1^\ast$, creates $m=|\alphabet(\alpha_1)|$ simultaneous copies of $A(\alpha_1)$, that is, it creates an automaton with $|A(\alpha_1)|^m$ states. Since our Kleene star is restricted to the unary case, we can use the standard, much simpler, Kleene star construction~\cite{berry1986regular}. 

Our construction is a modification of a construction from previous work~\cite{chiang+:cl2018}. 
Previously, the shuffle operation required $\alphabet(\alpha_1)$ and $\alphabet(\alpha_2)$ to be disjoint; to ensure this required some rearranging of the regular expression before converting to an automaton.   
Our construction, while sharing the same upper bound on the number of states, operates directly on the regular expression without any preprocessing.

\section{Learning Weights}

Given a collection of multisets, the weights of the transition matrices and the initial and final weights can be learned automatically from data.  
Given a multiset $w,$ we let $\mu(w)= \prod_{i} \mu(w_i)$.  The probability of $w$ over all possible multisets 
is 
\begin{align*}
P(w) &= \frac1Z \lambda \mu(w) \rho \\
Z &= \sum_{\text{multisets $w'$}} \lambda \mu(w') \rho.
\end{align*}
We must restrict $w'$ to multisets up to a given length bound, which can be set based on the size of the largest multiset which is reasonable to occur in the particular setting of use.  
Without this restriction, the infinite sum for $Z$ 
will diverge in many cases.  For example, if $\alpha = a^\ast$, then $\mu(a)^n=\mu(a)$ and thus $\lambda \mu(a) \rho = \lambda \mu(a)^n \rho$. Since this value is non-zero, the sum diverges.

The goal is to minimize the negative log-likelihood given by 
\[ L = - \sum_{w \in \text{data}} \log P(w). \]
To this end, we envision and describe two unique scenarios for how the multiset automata are formed.  

\subsection{Regular expressions}
In certain circumstances, we may start with a set of rules as weighted regular expressions and wish to learn the weights from data.  Conversion from weighted regular expressions to multiset automata can be done automatically, see Section~\ref{sec:RegExpToAutomata}. Now the multiset automata that result already have commuting transition matrices.  The weights from the weighted regular expression are the parameters to be learned.  
These parameters can be learned through stochastic gradient descent with the gradient computed through automatic differentiation, and the transition matrices will retain their commutativity by design.

\subsection{Finite automata}
We can learn the weighted automaton entirely from data by starting with a fully connected automaton on $n$ nodes.  All initial, transition, and final weights are initialized randomly.  
Learning proceeds by gradient descent on the log-likelihood with a penalty to encourage the transition matrices to commute.  Thus our modified log-likelihood is  
$$L' = L + \alpha \sum_{a,b} (\mu(a) \mu(b) - \mu(b)\mu(a))$$
Over time we increase the penalty by increasing $\alpha$.  This method has the benefit of allowing us to learn the entire structure of the automaton directly from data without having to form rules as regular expressions.  Additionally, since we set $n$ at the start, the number of states can be kept small and computationally feasible.  The main drawback of this method is that the transition matrices, while penalized for not commuting, may not exactly satisfy the commuting condition.

\section{Computing Inside Weights}

We can compute the total weight of a multiset incrementally by starting with $\lambda$ and multiplying by $\mu(a)$ for each $a$ in the multiset. But in some situations, we might need to compose the weights of two partial runs. That is, having computed $\mu(u)$ and $\mu(v)$, we want to compute $\mu(uv)$ in the most efficient way. Sometimes we also want to be able to compute $\mu(u)+\mu(v)$ in the most efficient way.

For example, if we divide $w$ into parts $u$ and $v$ to compute $\mu(u)$ and $\mu(v)$ in parallel \cite{ladner1980parallel}, afterwards we need to compose them to form $\mu(w)$. Or, we could intersect a context-free grammar with a multiset automaton, and parsing with the CKY algorithm would involve multiplying and adding these weight matrices. The recognition algorithm for extended DAG automata \cite{chiang+:cl2018} uses multiset automata in this way as well.

Let $M$ be a multiset automaton and $\mu(a)$ its transition matrices. Let us call $\mu(w)$ the matrix of \emph{inside weights} of $w$.  If stored in the obvious way, it takes $\mathcal{O}(d^2)$ space. If $w=uv$ and we know $\mu(u)$ and $\mu(v)$, we can compute $\mu(w)$ by matrix multiplication in~$\mathcal{O}(d^3)$ time. Can we do better?

The set of all matrices $\mu(w)$ spans a 
module
which we call $\ins(M)$. We show in this section that, under the right conditions, if $M$ has $d$ states, then $\ins(M)$ has 
a generating set of size 
$d$, so that we can represent $\mu(w)$ as a vector of $d$ coefficients. We begin with the special case of unary languages (\S\ref{sec:unary}), then after a brief digression to more general languages (\S\ref{sec:binary}), we consider multiset regular expressions converted to multiset automata (\S\ref{sec:combination}).

\subsection{Unary languages}
\label{sec:unary}

Suppose that the automaton is unary, that is, over the alphabet $\Sigma = \{a\}$. Throughout this section, we write $\mu$ for $\mu(a)$ for brevity.

\subsubsection{Ring-weighted}

The inside weights of a string $w = a^n$ are simply the matrix $\mu^n$, and the inside weights of a set of strings is a polynomial in $\mu$. We can take this polynomial to be our representation of inside weights, if we can limit the degree of the polynomial.

The Cayley-Hamilton theorem (CHT) says that any matrix $\mu$ over a commutative ring satisfies its own \emph{characteristic equation},
$
\det (\lambda I-\mu) = 0,
$
by substituting $\mu$ for $\lambda$. The left-hand side of this equation is the \emph{characteristic polynomial}; its highest-degree term is $\lambda^d$. So if we substitute $\mu$ into 
the characteristic equation
and solve for $\mu^d$, we have a way of rewriting any polynomial in $\mu$ of degree $d$ or more into a polynomial of degree less than $d$.

So representing the inside weights as a polynomial in $\mu$ takes only $O(d)$ space, and addition takes $O(d)$ time.
Naive multiplication of polynomials takes $O(d^2)$ time; fast Fourier transform can be used to speed this up to $O(d \log d)$ time, although $d$ would have to be quite large to make this practical.

\subsubsection{Semiring-weighted}

Some very commonly used weights do not form rings: for example, the Boolean semiring, used for unweighted automata, and the Viterbi semiring, used to find the highest-weight path for a string.

There is a version of CHT for semirings due to Rutherford~\cite{rutherford:1964}.  In a ring, the characteristic equation can be expressed using the sums of determinants of principal minors of order $r$.  Denote the sum of positive terms (even permutations) as $p_r$ and sum of negative terms (odd permutations) as $-q_r$.  Then Rutherford expresses the characteristic equation applicable for both rings and semirings as
\begin{align*}
    \lambda^n+q_1 \lambda^{n-1}+p_2 \lambda^{n-2} + q_3 \lambda^{n-3} + \ldots 
    &=  p_1 \lambda^{n-1}+q_2 \lambda^{n-2} + p_3 \lambda^{n-3} + \ldots
\end{align*}
For any $K \subseteq \mathbb{N}$, let $S_K$ be the set of all permutations of $K$, and let $\sgn(\sigma)$ be $+1$ for an even permutation and $-1$ for an odd permutation.  The characteristic polynomial is
\begin{align}
\sum_{K \subseteq [d]} \sum_{\pi \in S_K \atop \sgn(\pi) \neq (-1)^{|K|}} \left(\prod_{i\in K} \mu_{i,\pi(i)}\right) \lambda^{d-|K|}
&= \sum_{K \subseteq [d]} \sum_{\pi \in S_K \atop \sgn(\pi) = (-1)^{|K|}} \left(\prod_{i \in K} \mu_{i,\pi(i)}\right) \lambda^{d-|K|}. \label{eq:charpoly}
\end{align}

If we can ensure that the characteristic equation has just $\lambda^d$ on the left-hand side, then we have a compact representation for inside weights. The following result characterizes the graphs for which this is true.

\begin{theorem}
Given a semiring-weighted directed graph $G$, the characteristic equation of $G$'s adjacency matrix, given by the semiring version of CHT, has only $\lambda^d$ on its left-hand side if and only if $G$ does not have two node-disjoint cycles.
\end{theorem}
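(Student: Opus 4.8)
The plan is to identify, directly from the semiring characteristic polynomial in~\eqref{eq:charpoly}, precisely which monomials occur on the side carrying $\lambda^d$ (the left-hand side in Rutherford's form), and to put these monomials into correspondence with families of pairwise node-disjoint cycles of $G$. Throughout I regard the matrix entries $\mu_{ij}$ as formal, so that a product $\prod_{i\in K}\mu_{i,\pi(i)}$ is nonzero exactly when $G$ contains every edge $i\to\pi(i)$ with $i\in K$; this is what it means for the corresponding monomial to ``appear,'' and it is the reading under which ``$G$ does not have two node-disjoint cycles'' is the right answer. I also count a loop $i\to i$ as a cycle of length one, which is forced on us, since a fixed point of a permutation contributes a diagonal entry $\mu_{ii}$.

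The first step is the sign bookkeeping. For $\pi\in S_K$, let $c(\pi)$ be the number of cycles in its cycle decomposition, fixed points included. Decomposing $\pi$ into cycles and multiplying their signs gives $\sgn(\pi) = (-1)^{|K|-c(\pi)}$, so $\sgn(\pi) = (-1)^{|K|}$ if and only if $c(\pi)$ is even. The empty set contributes the empty permutation, with $c=0$ (even), which is the term $\lambda^{d}$; hence the side of~\eqref{eq:charpoly} on which $\lambda^d$ sits is
\[
\lambda^{d} \;+\; \sum_{\emptyset\neq K\subseteq[d]}\ \sum_{\substack{\pi\in S_K\\ c(\pi)\ \text{even}}}\Bigl(\prod_{i\in K}\mu_{i,\pi(i)}\Bigr)\lambda^{d-|K|}.
\]
So ``only $\lambda^d$ on the left-hand side'' is equivalent to: for every nonempty $K\subseteq[d]$ and every $\pi\in S_K$ with $c(\pi)$ even, at least one edge $i\to\pi(i)$ is missing from $G$.

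The second step is the dictionary between permutations realized in $G$ and node-disjoint cycles. If $\pi\in S_K$ and $G$ contains every edge $i\to\pi(i)$, then each cycle of $\pi$ of length $m\ge 2$ traces out a directed $m$-cycle of $G$ and each fixed point is a loop of $G$; since the supports of the cycles of $\pi$ partition $K$, we obtain $c(\pi)$ pairwise node-disjoint cycles of $G$. Conversely, $k$ pairwise node-disjoint cycles of $G$ on vertex sets $V_1,\dots,V_k$ assemble into a single permutation of $K = V_1\cup\dots\cup V_k$ whose cycle decomposition has exactly $k$ cycles and all of whose edges lie in $G$.

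Combining the two steps proves both directions. If $G$ has no two node-disjoint cycles, then any monomial with $K\neq\emptyset$ appearing on the $\lambda^d$-side would arise from some $\pi$ with $c(\pi)$ even; but $K\neq\emptyset$ forces $c(\pi)\ge 1$, and evenness then forces $c(\pi)\ge 2$, so the dictionary yields two node-disjoint cycles of $G$ --- a contradiction --- hence the left-hand side is just $\lambda^d$. Conversely, if $G$ has node-disjoint cycles $C_1$ and $C_2$, the dictionary gives a permutation $\pi$ of $K = V(C_1)\cup V(C_2)$ with exactly two cycles; then $c(\pi)=2$ is even and $\prod_{i\in K}\mu_{i,\pi(i)}$ is nonzero, so $\bigl(\prod_{i\in K}\mu_{i,\pi(i)}\bigr)\lambda^{d-|K|}$ is a monomial on the left-hand side distinct from $\lambda^d$. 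I expect the only delicate point to be the first step --- correctly locating $\lambda^d$ among the sign-laden terms of~\eqref{eq:charpoly} and keeping the ``loops are length-one cycles'' convention aligned with the diagonal entries of $\mu$; once that is pinned down, the rest is a routine translation between cycle decompositions of permutations and node-disjoint cycles of the graph.
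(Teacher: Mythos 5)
Your proposal is correct and follows essentially the same route as the paper's proof: both rest on the observation that $\sgn(\pi)=(-1)^{|K|}$ exactly when the cycle decomposition of $\pi$ has an even number of cycles (the paper phrases this via Harary's linear subgraphs, you via the identity $\sgn(\pi)=(-1)^{|K|-c(\pi)}$), combined with the correspondence between permutations supported on edges of $G$ and families of node-disjoint cycles. Your added care in locating the $\lambda^d$ term on the correct side of Rutherford's identity is a welcome clarification of a point the paper glosses over, but it does not change the argument.
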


\begin{proof}
Let $K$ be a node-induced subgraph of the directed graph $G$.
A \emph{linear subgraph} of $K$ is a subgraph of $K$ that contains all nodes in $K$ and each node has indegree and outdegree 1 within the subgraph, that is, a collection of directed cycles such that each node in $K$ occurs in exactly one cycle. Every permutation $\pi$ of $K$ corresponds to the linear subgraph of $K$ containing edges $(i, \pi(i))$ for each $i \in K$ \citep{harary:1962}. 

Note that $\sgn(\pi) = +1$ iff the corresponding linear subgraph has an even number of even-length cycles.  Moreover, note that $\sgn(\pi)=(-1)^{|K|}$ appearing in (\ref{eq:charpoly}) holds iff the corresponding linear subgraph has an even number of cycles (of any length). So if the transition graph does not have two node-disjoint cycles, the only nonzero term in (\ref{eq:charpoly}) with $\sgn(\pi)=(-1)^{|K|}$ is that for which $K = \emptyset$, that is, $\lambda^d$. To prove the other direction, suppose that the graph does have two node-disjoint cycles; then the linear subgraph containing just these two cycles corresponds to a $\pi$ that makes $\sgn(\pi)=(-1)^{|K|}$. \qed
\end{proof}

The coefficients in (\ref{eq:charpoly}) look difficult to compute; however, the product inside the parentheses is zero unless the permutation $\pi$ corresponds to a cycle in the transition graph of the automaton.  Given that we are interested in computing this product on linear subgraphs, we are only concerned with simple cycles.  Using an algorithm by Johnson \cite{johnson1975finding}, all simple cycles in a directed graph can be found in $\mathcal{O}((n+e)(c+1))$ with $n=\text{number of nodes}$, $e=\text{number of edges}$, and $c=\text{number of simple cycles}$.

\begin{theorem}
    A digraph D with no two disjoint dicycles has at most $2^{|V|-1}$ simple dicycles.
\end{theorem}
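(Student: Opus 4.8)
The plan is to induct on $n=|V(D)|$, reducing the problem to the strongly connected case. If $D$ is not strongly connected, then every simple dicycle lies inside a single strong component; if the components have $n_1,\ldots,n_k$ vertices with $k\ge 2$ and $\sum_i n_i=n$, each component still has no two disjoint dicycles and so by the inductive hypothesis contributes at most $2^{n_i-1}$ simple dicycles. A short calculation (induction on $k$, using $2^{a-1}+2^{b-1}\le 2^{a+b-1}$ for $a,b\ge 1$) gives $\sum_i 2^{n_i-1}\le 2^{n-1}$. The base case $n=1$ contributes at most a loop, and $2^{0}=1$, so all of the content is in the strongly connected case.

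For strongly connected $D$, the crucial structural claim is that, since $D$ has no two disjoint dicycles, it has a \emph{feedback vertex set of size one}: a vertex $v$ whose deletion makes $D$ acyclic (equivalently, a vertex lying on every dicycle). Granting such a $v$, every simple dicycle of $D$ passes through $v$, so apart from a possible loop at $v$ it has the form $v\,x_1\,x_2\cdots x_k\,v$, where $x_1\cdots x_k$ is a simple directed path in the acyclic digraph $D-v$ and $v\to x_1$ and $x_k\to v$ are edges of $D$.

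To count these I would use the auxiliary fact that an acyclic digraph on $m$ vertices has at most $2^m-1$ simple directed paths, single-vertex paths included: after fixing a topological order, any directed path visits its vertex set in increasing order, so each nonempty subset of the $m$ vertices is the vertex set of at most one simple path, and that path is then determined. Applying this to $D-v$, there are at most $2^{n-1}-1$ simple directed paths there; each extends to at most one dicycle through $v$ (its endpoints determine where $v$ attaches), and the at most one loop at $v$ adds one more, for a total of at most $2^{n-1}$.

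The main obstacle is exactly the structural step of the second paragraph: deducing, for a strongly connected digraph with no two disjoint dicycles, the existence of a single vertex meeting every dicycle. This is the point at which the hypothesis is really used, and I expect it to require either an appeal to the structure theory of digraphs without two disjoint cycles or a hands-on case analysis of the strongly connected digraphs in which no one vertex hits all dicycles; with that in hand, the remainder is the bookkeeping above.
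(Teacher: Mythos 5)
Your reduction to the strongly connected case and your counting argument in the presence of a single feedback vertex are both correct, and the latter (each nonempty subset of the vertices of the acyclic digraph $D-v$ carries at most one simple directed path, visited in topological order) is a cleaner justification of the $2^{|V|-1}$ bound than the paper's ``binary choice on $|V|-1$ vertices.'' But the structural claim on which everything rests --- that a strongly connected digraph with no two disjoint dicycles has a vertex meeting every dicycle --- is false. The bidirected triangle (three vertices, all six arcs, no loops) is strongly connected, any two of its five simple dicycles (three digons and two directed triangles) share a vertex, yet no single vertex lies on all five: each vertex misses the digon on the other two. This is exactly why Thomassen's classification, which the paper invokes, has more than one case: besides the digraphs with a one-vertex feedback set, there are those obtained from the bidirected triangle by subdivision and vertex splitting, and these require a separate counting argument (the paper uses three partial orders, one for each vertex of the triangle). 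You explicitly flag this step as the obstacle, but the proof as written supplies nothing for it, so the gap is genuine.

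Moreover, the missing case is not a routine patch. The bidirected triangle itself has $5$ simple dicycles while $2^{|V|-1}=4$, so if $2$-cycles are admitted as dicycles the claimed bound fails there outright. A completed proof therefore has to either restrict the class of digraphs (e.g., forbid digons) or handle the Thomassen exceptional family with a bound that is actually verified rather than asserted --- a caveat that applies equally to the paper's own treatment of its second case.
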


\begin{proof}
First, a theorem from Thomassen \cite{thomassen1987digraphs} limits the number of cases we must consider.  In the first case, one vertex, $v_s$, is contained in every cycle.  If we consider $G \setminus \{v_s\}$, this is a directed acyclic graph (DAG) and thus there is a partial order determined by reachability.  This partial order determines the order that vertices appear in any cycle in $G$, which limits the number of simple cycles to the number of choices for picking vertices to join $v_s$ in each cycle.  This is a binary choice on $|V|-1$ vertices, thus $2^{|V|-1}$ possible cycles
(see Figure~\ref{fig:simpleCycleBound}).

In the second case, the graph contains a subgraph with $3$ vertices with no self loops, but all $6$ other possible edges between them.  If we let $S$ be the set of these three vertices, then $G \setminus S$ has a partial order on it just as in the first case.  Additionally, for each $s \in S$, there exists a partial order on $G \setminus (S\setminus \{s\})$, and these uniquely determine the order of vertices in any cycle in $G$.  While the bound could be lowered, this is bounded above by $2^{|V|-1}$.

All other cases can be combined with the second case by observing that they all start with the same graph as the second case, then modified by subdivision (breaking an edge in two by inserting a vertex in the middle) or splitting (breaking a vertex in two, one with all in edges, one with all out edges, then adding one edge from the in vertex to the out vertex).  These cases do not violate the arguments of the second case, nor add any additional cycles.  Intuitively, these are graphs from case two with some edge(s) deleted. \qed
\end{proof}

\begin{figure}
\begin{center}
\begin{tikzpicture}

\node[state] (s) at  (0,0) {};
\node[state] (a) at  (0.75,0) {};
\node[state] (b) at (1.5,0) {};
\node[state] (c) at (2.25,0) {};
\node[state] (n) at (4.5,0) {};
\draw (s) edge[loop left] (s);

\draw (s) edge[bend right] (a);
\draw (s) edge[bend right] (b);
\draw (s) edge[bend right] (c);
\draw (s) edge[bend right] (n);

\draw (b) edge[bend right] (s);
\draw (c) edge[bend right] (s);
\draw (n) edge[bend right] (s);

\draw (a) edge (s);
\draw (b) edge (a);
\draw (c) edge (b);

\draw (c) edge[bend right] (a);
\draw (n) edge[bend right] (a);
\draw (n) edge[bend right] (b);
\draw (n) edge[bend right] (c);

\path (c) to node {\dots} (n);
\node (a3) at (3,0) {};
\draw (a3) to (c);
\node (a3) at (3.75,0) {};
\draw (n) to (a3);

\end{tikzpicture}
\end{center}
\caption{A directed graph achieving the $2^{|V|-1}$ simple cycle bound.}
\label{fig:simpleCycleBound}
\end{figure}
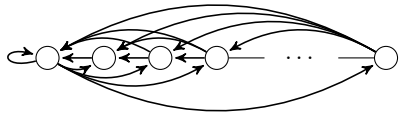

\subsection{Digression: Binary languages and beyond}
\label{sec:binary}

If $\Sigma$ has two symbols and the transition matrices are commuting matrices over a field, then inside weights can still be represented in $d$ dimensions \cite{gerstenhaber}. We give only a brief sketch here of the simpler, algebraically closed case \cite{barria+halmos}.

Given a matrix $M$ with entries in an algebraically closed field, there exists a matrix $S$ such that $S^{-1}MS$ is in \emph{Jordan form}.  A matrix in Jordan form has the following block structure.  Each $A_i$ is a square matrix and $\lambda_i$ is an eigenvalue.

\begin{equation*}
S^{-1}MS=
  \begin{bmatrix}
    A_1&  & 0\\
     & \ddots &  \\    
    0 &  & A_p\\
  \end{bmatrix}
\qquad
A_i = 
  \begin{bmatrix}
    \lambda_i & 1 &  & 0\\
     & \lambda_i & \ddots & \\
     &  & \ddots & 1 \\
    0 &  &  & \lambda_i
  \end{bmatrix}
\end{equation*}
Let the number of rows in $A_i$ be $k_i$.  Here let $M = \mu(a)$ be one of the commuting transition matrices.  Then the following matrices span the algebra generated by the commuting transition matrices $\mu(a)$ and $\mu(b)$:
\begin{align*}
    1, \mu(a), & \ldots, \mu(a)^{k_1-1},\\
    \mu(b), \mu(a)\mu(b), & \ldots, \mu(a)^{k_2-1}\mu(b),\\
    &\vdotswithin{\ldots}\\
    \mu(b)^{p-1}, \mu(a)\mu(b)^{p-1}, &\ldots, \mu(a)^{k_p-1}\mu(b)^{p-1}.
\end{align*}
The number of matrices in this span is equal to the dimension of $\mu(a)$ and $\mu(b)$, which in our case is $d$.  Further, a basis for the algebra is contained within this span.  Therefore the inside weights can be represented in $d$ dimensions.

On the other hand, if the weights come from a ring, the above fact does not hold in general \cite{holbrook+omeara}.
Going beyond binary languages, if $\Sigma$ has four or more symbols, then inside weights might need as many as $\lfloor d^2/4 \rfloor+1$ dimensions, which is not much of an improvement \cite{gerstenhaber}
. The case of three symbols remains open \cite{holbrook+omeara}.

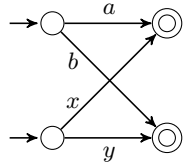
\begin{figure}
\begin{center}
\begin{tikzpicture}[x=0.6in,y=0.6in]
\node[initial,state](q00) at (0,0) {};
\node[initial,state](q01) at (0,1) {};
\node[accepting,state](q10) at (1,0) {};
\node[accepting,state](q11) at (1,1) {};
\draw (q00) edge[below] node{$y$} (q10);
\draw (q00) edge[very near start,above] node{$x$} (q11);
\draw (q01) edge[very near start,below] node{$b$} (q10);
\draw (q01) edge[above] node{$a$} (q11);
\end{tikzpicture}
\end{center}
\caption{Example commutative automaton whose inside weights require storing more than $d$ values.}
\label{fig:gerstenhaber}
\end{figure}

\subsection{Regular expressions}
\label{sec:combination}

Based on the above results, we might not be optimistic about efficiently representing inside weights for languages other than unary languages. But in this subsection, we show that for multiset automata converted from multiset regular expressions, we can still represent inside weights using only $d$ coefficients. We show this inductively on the structure of the regular expression.

First, we need some properties of the matrices $\kappa(a)$.
\begin{lemma} \label{lem:kappa}
If $\mu(a)$ and $\kappa(a)$ are constructed from a multiset regular expression, then
\begin{enumerate}
\item $\kappa(a)\kappa(a) = \kappa(a)$. \label{item:idempotent}
\item $\kappa(a)\kappa(b) = \kappa(b)\kappa(a)$.
\item $\mu(a) \kappa(a) = 0$.
\item $\mu(a) \kappa(b) = \kappa(b) \mu(a)$ if $a \neq b$.
\end{enumerate}
\end{lemma}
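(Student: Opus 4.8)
The plan is to prove Lemma~\ref{lem:kappa} by structural induction on the regular expression $\alpha$, exactly mirroring the recursive construction of $\mu$ and $\kappa$ in \S\ref{sec:RegExpToAutomata}. All four properties should be carried through the induction simultaneously, since the inductive step for the shuffle product will need all of them at once for the subexpressions.

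First I would dispatch the base cases. For $\alpha = a$, one checks directly from the $2\times2$ matrices that $\kappa(a)$ is the diagonal idempotent $\mathrm{diag}(1,0)$ (so (1) holds), that $\kappa(b) = I$ for $b \neq a$ (so (2) and (4) are trivial), that $\mu(a)\kappa(a) = 0$ by a one-line matrix product (so (3) holds), and that $\mu(a)\kappa(b) = \mu(a) = \kappa(b)\mu(a)$ for $b \neq a$. The cases $\alpha = \epsilon$ and $\alpha = \emptyset$ are even simpler since all $\kappa(a) = I$ and all $\mu(a) = 0$. The scalar case $\alpha = k\alpha_1$ is immediate because $\mu$ and $\kappa$ are unchanged (only $\rho$ is rescaled). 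The union case $\alpha = \alpha_1 \cup \alpha_2$ is block-diagonal in both $\mu$ and $\kappa$, so each property is inherited blockwise from the two inductive hypotheses.

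The two substantive cases are shuffle and Kleene star. For $\alpha = \alpha_1\alpha_2$, we have $\kappa(a) = \kappa_1(a)\otimes\kappa_2(a)$ and $\mu(a) = \mu_1(a)\otimes\kappa_2(a) + I\otimes\mu_2(a)$, and the proof reduces to manipulating Kronecker products using the mixed-product rule $(A\otimes B)(C\otimes D) = AC\otimes BD$ together with the four inductive hypotheses on the pieces. Property (1) follows from idempotence of $\kappa_1(a)$ and $\kappa_2(a)$. Property (2) follows from commutativity of the $\kappa_i$. Property (3), $\mu(a)\kappa(a) = 0$, is the key computation: expanding $\bigl(\mu_1(a)\otimes\kappa_2(a) + I\otimes\mu_2(a)\bigr)\bigl(\kappa_1(a)\otimes\kappa_2(a)\bigr)$ gives $\mu_1(a)\kappa_1(a)\otimes\kappa_2(a)\kappa_2(a) + \kappa_1(a)\otimes\mu_2(a)\kappa_2(a)$, and both summands vanish by inductive hypothesis (3). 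Property (4), for $a \neq b$, expands similarly and every factor commutes by inductive hypotheses (2) and (4); this is the step I expect to be the most bookkeeping-heavy, since one must track which subscript each matrix carries and invoke (4) on the cross terms $\mu_1(a)\kappa_1(b)$ and $\mu_2(a)\kappa_2(b)$ and plain commutativity of $\kappa$ matrices with $I$.

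For $\alpha = \alpha_1^\ast$ (with $\alpha_1$ unary), $\kappa(a) = \kappa_1(a)$ is unchanged, so properties (1), (2), (4) are immediate from the inductive hypothesis. Only property (3) needs work: $\mu(a) = \mu_1(a) + \rho\lambda\mu_1(a)$ with $\lambda = \lambda_1$ and $\rho = \rho_1 + \lambda_1^\top$, so $\mu(a)\kappa(a) = \mu_1(a)\kappa_1(a) + \rho\lambda_1\mu_1(a)\kappa_1(a)$, and both terms contain the factor $\mu_1(a)\kappa_1(a) = 0$ by inductive hypothesis. Since $\alpha_1$ is unary there is no $b \neq a$ to worry about in the relevant alphabet, but for completeness one notes that symbols outside $\alphabet(\alpha_1)$ keep their trivial $\mu$ and $\kappa$ from the construction. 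The main obstacle overall is simply the Kronecker-product bookkeeping in the shuffle case — there are no conceptual surprises, just the need to be careful about operand order and about which inductive hypothesis licenses each rewrite.
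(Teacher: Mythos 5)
Your proof is correct, and it is the natural (surely intended) argument: the paper states Lemma~\ref{lem:kappa} without any proof at all, so there is nothing to compare against, but a simultaneous structural induction carrying all four properties is exactly what the construction in \S\ref{sec:RegExpToAutomata} calls for. All the key steps check out --- in particular the shuffle case, where $\mu(a)\kappa(a) = \mu_1(a)\kappa_1(a)\otimes\kappa_2(a)\kappa_2(a) + \kappa_1(a)\otimes\mu_2(a)\kappa_2(a) = 0$ by hypothesis (3) applied to each factor, and the star case, where both terms of $\mu(a)\kappa(a)$ contain the vanishing factor $\mu_1(a)\kappa_1(a)$. Your one unstated reliance is on the observation that $\kappa(b) = I$ and $\mu(b) = 0$ for every symbol $b$ not occurring in the subexpression (needed to make properties (2) and (4) trivial in the unary star case and in the base cases $\epsilon$, $\emptyset$, whose constructions the paper leaves implicit); this itself follows by an easy induction and is worth recording as a fifth clause if you write the proof out in full.
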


To show that $\ins(M)$ 
can be expressed in 
$d$ dimensions, we will need to prove an additional property about the structure of $\ins(M)$.  
Note that if $\ins(M)$ is not a free-module, then $\dim\ins(M)$ is the size of the generating set we construct.
\begin{theorem}
If $M$ is a ring-weighted multiset automaton with $d$ states converted from a regular expression, then \begin{enumerate}
\item $\dim\ins(M) = d$.
\item\label{item:decompose} $\ins(M)$ can be decomposed into a direct sum \[ \ins(M) \cong \bigoplus_{\Delta \subseteq \Sigma} \ins_\Delta(M) \]
where $\mu(w) \in \ins_\Delta(M)$ iff $\alphabet(w) = \Delta$.
\end{enumerate}
\end{theorem}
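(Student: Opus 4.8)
The plan is to prove both claims simultaneously by structural induction on the regular expression $\alpha$, tracking along the way a concrete generating set for $\ins(M)$ that respects the $\Delta$-decomposition. For each constructor I will exhibit explicit generators indexed in a way that makes the direct-sum structure visible, and I will verify that every product $\mu(u)\mu(v)$ stays in the span. The base cases $\alpha = a$, $\alpha = \epsilon$, $\alpha = \emptyset$ are immediate: for $\alpha = a$ the automaton has two states, $\ins(M)$ is spanned by $I$ (in $\ins_\epsilon$) and $\mu(a)$ (in $\ins_{\{a\}}$, with $\mu(a)^2 = 0$ so no further generators arise), giving dimension $2 = d$. The scaling case $\alpha = k\alpha_1$ changes nothing structurally since $\mu$ and $\kappa$ are unchanged, so $\ins(M) = \ins(M_1)$.

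For $\alpha = \alpha_1 \cup \alpha_2$, the transition matrices are block-diagonal, so $\mu(w)$ is block-diagonal with blocks $\mu_1(w)$ and $\mu_2(w)$; hence $\ins(M) \cong \ins(M_1) \oplus \ins(M_2)$ as modules, the dimensions add to $d_1 + d_2 = d$, and the $\Delta$-decomposition is inherited blockwise: $\ins_\Delta(M) \cong \ins_\Delta(M_1) \oplus \ins_\Delta(M_2)$. The Kleene star case $\alpha = \alpha_1^\ast$ is unary by the mc-restriction, so it is governed by the unary analysis of Section~\ref{sec:unary}: the new transition matrix is $\mu(a) = \mu_1(a) + \rho\lambda\mu_1(a)$ on the same $d_1 = d$ states, and I must check that polynomials in this matrix still close up in $d$ dimensions. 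Here the Cayley--Hamilton bound gives at most $d$ powers, but I should confirm that the $\Delta$-grading is just the two-piece split $\Delta = \emptyset$ (spanned by $I$) versus $\Delta = \{a\}$ (spanned by $\mu(a)^1, \ldots$), consistent with $\ins_\emptyset \oplus \ins_{\{a\}}$.

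The concatenation case $\alpha = \alpha_1\alpha_2$ is where the real work lies, and I expect it to be the main obstacle. Here $\mu(a) = \mu_1(a)\otimes\kappa_2(a) + I\otimes\mu_2(a)$ and $\kappa(a) = \kappa_1(a)\otimes\kappa_2(a)$, on $d_1 d_2 = d$ states. The idea is to use Lemma~\ref{lem:kappa} to push all $\kappa$-factors to one side and all $\mu$-factors to the other, so that $\mu(w)$ can be rewritten, for each way of splitting $\alphabet(w) = \Delta_1 \sqcup \Delta_2$ between the two factors, as a sum of terms of the form $\mu_1(u)\kappa_1(\cdot) \otimes \kappa_2(\cdot)\mu_2(v)$; the identities $\mu(a)\kappa(a)=0$ and the commutation relations are exactly what is needed to make this normalization well-defined and to collapse overlapping contributions. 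Then $\ins_\Delta(M)$ decomposes as $\bigoplus_{\Delta_1 \sqcup \Delta_2 = \Delta} \ins_{\Delta_1}(M_1) \otimes (\text{a }\kappa\text{-twisted copy of } \ins_{\Delta_2}(M_2))$, and a dimension count using the inductive hypothesis $\dim\ins(M_i) = \sum_{\Delta_i} \dim\ins_{\Delta_i}(M_i) = d_i$ together with $\sum_{\Delta_1 \sqcup \Delta_2 = \Delta, \text{ over all }\Delta} \dim\ins_{\Delta_1}(M_1)\dim\ins_{\Delta_2}(M_2) = d_1 d_2 = d$ closes the argument. The delicate point is verifying that the $\kappa$-twisting does not change the dimension of the $\ins_{\Delta_2}(M_2)$ factor --- i.e.\ that multiplying on the left by the idempotent $\kappa_2(b)$ for the symbols $b \notin \Delta_2$ acts injectively on the relevant submodule --- and that these twisted pieces for different bipartitions $\Delta_1 \sqcup \Delta_2$ are genuinely independent (direct, not merely spanning). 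I would establish independence by exploiting the $\kappa$-idempotents as a system of orthogonal-ish projections that separate the graded pieces, reading off from which symbols have been consumed on each side.
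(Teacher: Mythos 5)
Your plan is essentially the paper's proof: structural induction in which the union case is handled block-diagonally and the concatenation case is handled by expanding $\prod_{a\in w}\bigl(\mu_1(a)\otimes\kappa_2(a)+I\otimes\mu_2(a)\bigr)$ into $\sum_{uv=w}\mu_1(u)\otimes\kappa_2(u)\mu_2(v)$ and observing that $\kappa_2(u)$ depends only on $\alphabet(u)$, so the generators $e_i\otimes\kappa_2(u)f_j$ number only $d_1 d_2$. Two small differences: the paper folds your base and Kleene-star cases into a single ``$\alpha$ unary'' case, and it supplies the detail you flagged but left unverified there (the initial state has no incoming transitions, so the characteristic polynomial has no $I$ term, giving the $\ins_\emptyset\oplus\ins_{\{a\}}$ split), while it sidesteps your worry about genuine independence of the graded pieces by stipulating that when $\ins(M)$ is not a free module, $\dim\ins(M)$ simply denotes the size of the constructed generating set.
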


\begin{proof}
By induction on the structure of the regular expression $\alpha$.

If $\alpha$ is unary: the Cayley-Hamilton theorem gives a 
generating set \\
$\{I, \mu(a), \ldots, \mu(a)^{d-1}\}$, which has size $d$. Moreover, let $\ins_\emptyset(M)$ be the span of $\{I\}$ and $\ins_{\{a\}}(M)$ be the span of the $\mu(a)^i$ ($i>0$). The automaton $M$, by construction, has a state (the initial state) with no incoming transitions. That is, its transition matrix has a zero column, which means that its characteristic polynomial has no $I$ term. Therefore, if $w \neq \epsilon$, $\mu(w) \in \ins_{\{a\}}(M)$.

If $\alpha = k\alpha_1$, then $\ins(M) = \ins(M_1)$, so both properties hold of $\ins(M)$ if they hold of $\ins(M_1)$.

If $\alpha = \alpha_1 \cup \alpha_2$, the inside weights of $M_1 \cup M_2$ for $w$ are
\[\mu(w) = \prod_{a \in w} \mu(a) = \prod_{a} \begin{bmatrix} \mu_1(a) & 0 \\ 0 & \mu_2(a) \end{bmatrix} = \begin{bmatrix} \prod_a \mu_1(a) & 0 \\ 0 & \prod_a \mu_2(a) \end{bmatrix} = \begin{bmatrix} \mu_1(w) & 0 \\ 0 & \mu_2(w) \end{bmatrix}.\] 
Thus, $\ins(M) \cong \ins(M_1) \oplus \ins(M_2)$, and $\dim \ins(M) = \dim \ins(M_1) + \dim \ins(M_2)$. Moreover, $\ins_\Delta(M) \cong \ins_\Delta(M_1) \oplus \ins_\Delta(M_2)$.

If $\alpha = \alpha_1 \alpha_2$, the inside weights of $M_1 \shuffle M_2$ for $w$ are
\begin{align*}
\mu(w) &= \prod_{a \in w} \mu(a) = \prod_{a \in w} (\mu_1(a) \otimes \kappa_2(a) + I \otimes \mu_2(a))  \\
&= \sum_{uv=w} \left( \prod_{a \in u} \mu_1(a) \otimes \prod_{a \in u} \kappa_2(a) \prod_{a \in v} \mu_2(a) \right) \\
&= \sum_{uv=w} \mu_1(u) \otimes \kappa_2(u) \mu_2(v)
\end{align*}
where we have used Lemma~\ref{lem:kappa} and properties of the Kronecker product. Let $\{e_i\}$ and $\{f_i\}$ be a 
generating set
for $\ins(M_1)$ and $\ins(M_2)$, respectively. Then the above can be written as a linear combination of terms of the form $e_i \otimes \kappa_2(u) f_j$. We take these as 
a generating set 
for $\ins(M)$. Although it may seem that there are too many 
generators, 
note that if both $\mu_1(u)$ and $\mu_1(u')$ depend on $e_i$, they belong to the same  
submodule
and therefore use the same symbols, so $\kappa_2(u) = \kappa_2(u')$ (Lemma~\ref{lem:kappa}.\ref{item:idempotent}). Therefore, the $e_i \otimes \kappa_2(u) f_j$ form a 
generating set 
of size $\dim\ins(M_1) \cdot \dim\ins(M_2)$.

Moreover, let $\ins_\Delta(M)$ be the 
submodule 
spanned by all the $\mu_1(u) \otimes \kappa_2(u) \mu_2(v)$ such that $\alphabet(uv) = \Delta$. \qed
\end{proof}

\section{Conclusion}
We have examined weighted multiset automata, showing how to construct them from weighted regular expressions, how to learn weights automatically from data, and how, in certain cases, inside weights can be computed more efficiently in terms of both time and space complexity.  We leave implementation and application of these methods for future work.

\section*{Acknowledgements}

We would like to thank the anonymous reviewers for their very detailed and helpful comments.

This research is based upon work supported by the Office of the Director of
National Intelligence (ODNI), Intelligence Advanced Research Projects
Activity (IARPA), via AFRL Contract \#FA8650-17-C-9116.
The views and conclusions contained herein are those of the authors and
should not be interpreted as necessarily representing the official policies or
endorsements, either expressed or implied, of the ODNI, IARPA, or the
U.S. Government. The U.S. Government is authorized to reproduce and
distribute reprints for Governmental purposes notwithstanding any
copyright annotation thereon.

\bibliography{references}

\end{document}